\def\ps@headings{
\def\@oddhead{\mbox{}\scriptsize\rightmark \hfil \thepage}
\def\@evenhead{\scriptsize\thepage \hfil \leftmark\mbox{}}
\def\@oddfoot{}
\def\@evenfoot{}}
\newtheorem{rem}{Remark}
\newtheorem{thm}{Theorem}
\newtheorem{prop}{Proposition}
\def\R{\mathbb{R}}
\newcommand{\E}[1]{\mathbb{E}#1}
\newenvironment{itemize*}
  {\begin{itemize}
    \setlength{\itemsep}{0pt}
    \setlength{\parskip}{0pt}}
  {\end{itemize}}
\newcommand{\mynote}[3]{
    \fbox{\bfseries\sffamily\scriptsize#1}
    {\small$\blacktriangleright$\textsf{\emph{\color{#3}{#2}}}$\blacktriangleleft$}}}
\newcommand{\mynote}[3]{}}
\begin{document}
\title{Can P2P Networks be Super-Scalable? \vspace{-.3cm}}

\author{\IEEEauthorblockN{Fran{\c c}ois Baccelli}
\IEEEauthorblockA{UT Austin \& INRIA -- \'ENS\\
USA
}
\and
\IEEEauthorblockN{Fabien Mathieu}
\IEEEauthorblockA{INRIA -- University Paris 7\\
France
}
\and
\IEEEauthorblockN{Ilkka Norros}
\IEEEauthorblockA{VTT\\
Finland
}
\and
\IEEEauthorblockN{R\'emi Varloot}
\IEEEauthorblockA{INRIA\\
France
}
}

\maketitle

\IEEEpeerreviewmaketitle

\begin{abstract}
We propose a new model for peer-to-peer networking which takes
the network bottlenecks into account beyond the access. This
model can cope with key features of P2P networking
like degree or locality constraints together with the fact that distant peers
often have a smaller rate than nearby peers.

Using a network model based on rate functions, we give a closed form expression of peers download performance in the system's fluid limit, as well as approximations for the other cases. Our results show the existence of realistic settings for which the average download time is a decreasing function of the load, a phenomenon that we call super-scalability. 

\end{abstract}

\section{Introduction}

The Peer-to-Peer (P2P) paradigm has been widely used to quickly deploy low-cost, scalable, decentralized architectures. For instance, the success of BitTorrent~\cite{bittorrent} has shown that file-sharing can be provided with full scalability. Although many other architectures currently compete with P2P (dedicated Content Distribution Networks, Cloud-based solutions, \ldots),
P2P is still unchallenged with respect to its low-cost and scalability features, and remains a major actor in the field of content distribution.

Today, the main limitation for P2P content distribution is probably the access upload bandwidth, as even high-speed Internet access connections are
often asymmetric with a relatively low uplink capacity. Therefore most P2P content distribution performance 
models assume a relatively
low access bandwidth as the main performance bottleneck. However, in a near
future the deployment of very high speed access (e.g. FTTH) will challenge the
justification of this assumption. 
This raises the need of new P2P models that describe what
happens when the access is not necessarily the main/only bottleneck
and that allow one to better understand the fundamental
limitations of P2P.

\vspace{-.15cm}
\subsection{Contributions}
\vspace{-.15cm}

{\bf {A new model.}} The first contribution of the present paper is
the model presented in Section~\ref{sec:model}, which
features the following two key ingredients:
1) a spatial component thanks to which the topology of
the peer locations is used to determine their interactions
2) a networking component allowing one to represent the
actual exchange throughput between peers.

{\bf {A promising form of scalability.}} 
In most P2P bandwidth models, the upload/download capacity is the bottleneck
determining the exchange throughput obtained by peers~\cite{Veciana03fairness,qiusrikant04,benbadis08playing}. This creates \emph{scalability}, where the download latency remains constant when the system load increases. Our new model exhibits a stronger form of scalability, which we call \emph{super-scalability}, where the service latency actually {\em decreases} with the
system load.

We show in Sections \ref{sectoy} and \ref{sec:theory} that super-scalability is a consequence of network dynamics causing the service rate of a typical customer to increase with the load of the system.

{\bf {Conditions for super-scalability to hold.}} 
One may question the realism of such a model, as the underlying network obviously cannot sustain arbitrarily high rates. Section~\ref{sec:lim} combines our model with an abstract (physical) network model to determine the conditions for which our model makes sense and super-scalability occurs.

Another natural issue is data availability: bandwidth can be a bottleneck only if peers have something to transmit to each other. We address this issue in Section~\ref{sec:chunks}, where we study the impact of data availability on the effective download performance.

{\bf {The laws of super-scalability.}}
Starting from the basic model studied in Section~\ref{sec:theory}, we build in Section \ref{sec:extend} a Swiss Army Knife for handling many realistic variants: generic rate functions, auxiliary servers, seeding behavior of users, access bottleneck conditions\ldots The corresponding laws determine optimal tuning of the parameters of the P2P algorithms e.g. peering degree, transport protocol or seeding times.

\vspace{-.15cm}
\subsection{Related Work}
\vspace{-.15cm}

Our main scenario 
is inspired by a BitTorrent-like file-sharing protocol. In BitTorrent \cite{bittorrent}, a file is segmented
into small chunks and each downloader (called \emph{leecher}) exchanges
chunks with its neighbors in a peer-to-peer overlay network. A peer may continue to distribute chunks after it has completed its own download (it is then called a \emph{seeder}).
Here is a short summary of what is kown on this scenario.

{\bf {Bandwidth-centered modeling.}} Some studies have  analyzed the effectiveness of P2P file-sharing with a simple dynamic system model of peer arrival, focusing on the
performance under the assumption that the access bandwidth is the main bottleneck~\cite{Veciana03fairness,qiusrikant04,benbadis08playing}. While the present paper focuses on a similar bandwidth-centered approach, it introduces a richer family of peer interaction models.

{\bf {Chunk availability.}} Another potential bottleneck is chunk availability. The worst possible case is the ``missing piece syndrome''~\cite{mathieu:missing}, where one chunk keeps existing in only a few copies (or none!) and the peer population can grow unboundedly while trying to get that chunk. The syndrome may happen for some scenarios~\cite{hajekzhu10,zhuhajekarx11}, but it can be avoided by using more or less sophisticated download policies, at the cost of somewhat increased download times, see \cite{hajekzhu10,zhuhajekarx11,reittuiwsos09,norreiei11,oguzanantnor-arxiv}.
Also note that
\cite{massoulievojnovic08} proposed an elegantly abstracted stochastic
chunk-level model of uncoordinated file-sharing. The results in \cite{massoulievojnovic08}
indicate that if the system has high input rate and starts with a 
large and sufficiently balanced population of chunks, it may perform for a long time without missing chunk even if there is no seeder.

In this paper, we assume that missing chunk issues are avoided by some mechanism (like getting the \emph{locally rarest} chunk with high priority), so the impact of chunk on performance is reasonable. Nevertheless, we estimate this impact through a very simple chunk-level modeling, inspired by the ones proposed in \cite{qiusrikant04} and \cite{massoulievojnovic08}.

{\bf{Spatially-dependent rate.}} While a large number of studies consider the case of heterogeneous rates, to the best of our knowledge, none considers a system where the transfer speeds depend on pair-wise distances but not on the nodes as such. There are some earlier papers considering P2P systems in a spatial framework (for instance, \cite{pannet-susitaival-06}), but they do not assume that distance has some effect on transfer speed. 
Our paper seems to be the first where a peer's
downloading rate is a function of its distances to other peers.

\vspace{-.1cm}
\section{Super--scalability Toy Example}
\label{sectoy}
\vspace{-.1cm}

Before getting into the core of the paper, consider a system in steady state where 
peers arrive with some arrival intensity $\lambda$,
download some file of size $F$
and leave the system as soon as their own download is completed.
We neglect here geometry as well
as chunk availability issues. By the latter we mean that a peer has always a chunk to
provide for another, unfinished peer.

Suppose that the access upload bandwidth is
the main bottleneck. If $U$ is the typical upload bandwidth
of a peer, then it makes sense to assume that $U$ is also the
typical download throughput experienced by each peer. In particular,
in the steady state (if any), the mean latency $W$ and the average
number of peers $N$ should be such that
\vspace{-.2cm}
\begin{equation}
	W=\frac{F}{U}\text{ and }\
	N=\lambda W=\frac{\lambda F}{U}\text{ (Little's Law).}
\label{eq:bottleneck-1.0}
\vspace{-.15cm}
\end{equation}
Although very simple, \eqref{eq:bottleneck-1.0} contains a core property of
standard P2P systems: the mean latency is independent of the arrival rate.
This is the \emph{scalability} property, one of the main
motivations for using P2P.

Now, imagine a complete shift of the bottleneck paradigm.
Let the main resource bottleneck be the (logical, directed)
links between nodes instead of the nodes themselves. We should
then consider the typical bandwidth $U$ \emph{from one peer to
another} as the key limitation. If each peer is connected to every
other one (the interaction graph is complete at any time),
then Equation \eqref{eq:bottleneck-1.0}
should be replaced by
$W=\frac{F}{(N-1)U}$ and $N=\lambda W$, which leads to
\vspace{-.15cm}
\begin{equation*}
N=\sqrt{\frac{\lambda F}{U}+\frac{1}{4}}+\frac{1}{2}\text{ and }
W=\sqrt{\frac{F}{\lambda U}+\left(\frac{1}{2\lambda}\right)^2}+\frac{1}{2\lambda}\text{.}
\label{eq:bottleneck-2.1}
\vspace{-.15cm}
\end{equation*}
For $\frac{\lambda F}{U}\gg1$, this can be approximated by
\begin{equation}
N\approx\sqrt{\frac{\lambda F}{U}}\text{ and }\\
W\approx\sqrt{\frac{F}{\lambda U}}\text{.}\\
\label{eq:bottleneck-2.2}
\end{equation}
Now, the service time is inversely proportional
to the square root of the arrival intensity: this is \emph{super-scalability}.

\begin{rem} \emph{In fact, the real solution is a little bit more complex than that due to size fluctuations that have not been taken into account here. A more rigorous description of the toy model is available in \cite{baccelli:inria-00615523}.}
\end{rem}

In this toy example, the central reason for super-scalability
is rather obvious: the number of edges in a complete graph
is of the order of the square of the number of nodes,
and so is the overall service capacity.

The main question addressed in the present paper is
to better understand the fundamental limitations of P2P systems
and in particular to check whether super-scalability can possibly
hold in future, network-limited, P2P systems, where the throughput
between peers will be determined by transport protocols and  
network resource limitations rather than the upload capacity alone.
This requires the definition of a new model allowing one
to capture the toy model idea while taking into account the limitations inherent to P2P overlays
as well as network capacity constraints.

\vspace{-.1cm}
\section{Network Limited P2P Systems}
\vspace{-.1cm}
\label{sec:model}
\def\R{\mathbb{R}}

The aim of this section is to define a basic model that tries to capture super-scalability, spatially dependent rates and P2P constraints. This model will be extended in the last sections of the paper.

\begin{table}
\centering
\caption{Notation for the Basic Model}
\vspace{-.15cm}
{
\begin{tabular}{|l|l|l|}
\hline
Name & Description & Units \\
\hline
\hline
$\lambda$ & Leecher arrival rate & $m^{-2}\cdot s^{-1}$\\
\hline
$C$ & Rate parameter & $bits\cdot s^{-1}\cdot m$ \\
\hline
$F$ & Mean file size & $bits$ \\
\hline
$R$ & Peering range & $m$\\
\hline
$W$ & Mean latency & $s$\\
\hline
$\mu$ & Mean rate & $bits\cdot s^{-1}$\\
\hline
$\beta$ & Peer density & $m^{-2}$\\
\hline
\end{tabular}
}
\vspace{-.5cm}
\end{table}

{\bf{Spatial domain.}} Our peers live in a domain $D$ equipped with a distance $d$. The meaning of $d$ can be manyfold: physical distance; latency-based pseudo-distance~\cite{flv08}; $D$ can even be some representation of peer categories, the position of a peer representing its own centers of interest. The main point is that we assume that the rate between two peers depends on their distance in $D$.
For simplicity, we focus on a basic model where $D$ is an arbitrarily large torus that approximates the Euclidean plane $\R^2$, but there is no basic difficulty in extending this framework to other topologies better suited to model networks, like a hyperbolic space~\cite{hyperbolic}. Distances in $D$ are expressed in meters, regardless of the actual meaning of $D$.

{\bf{Arrival rate.}} We assume that new peers arrive according to a
Poisson process with space-time intensity $\lambda$ ({\em
``Poisson rain''}). The parameter $\lambda$, expressed in $m^{-2}\cdot s^{-1}$, 
describes the birth rate of peers: the number of peer that arrive in
a domain of surface $A$ (expressed in $m^2$) in an interval $[s,t]$ (in seconds) is a Poisson random variable with parameter $\lambda A(t-s)$.

{\bf{Data rate.}}
For our basic model, we assume that the transfer rate is determined by a congestion mechanism like TCP Reno. On the path between two peers, let $\vartheta$ denote the packet loss probability and $\mathrm {RTT}$ the round trip time.  Then the square root formula \cite{ott} stipulates that the rate obtained on this path is
$ \frac{\xi}{{\mathrm {RTT}} \sqrt{\vartheta}},$ with $ \xi= \sim1.309$. Assuming the RTT to be proportional to distance $r$ yields a transfer rate of the form
\vspace{-.1cm}
\begin{equation}
\label{eqr0}
f(r)=\frac{C}{r}\text{,}
\vspace{-.15cm}
\end{equation}
\noindent where $C$ is a rate parameter expressed in $bits\cdot s^{-1}\cdot m$.

We assume that the rates are additive, so that the total download rate of a peer $x$ is 
\vspace{-.15cm}
\begin{equation}
\label{eqdeathrate}
\mu(x)=\sum_{y\in N(x)}f(d(x,y))\text{,}
\vspace{-.15cm}
\end{equation}
\noindent where $N(x)$ is the set of neighbors of $x$ (in the overlay) and $d(x,y)$ the distance between $x$ and $y$.

We consider symmetric connections, because: the data rate function is symmetric; chunk availability may be neglected for proper parameters (see Section \ref{sec:chunks}); some tit-for-tat mechanisms may be at play to enforce some kind of reciprocity between peers. By symmetry, $\mu(x)$ is also the upload rate of a peer 
at $x$. In order for the access not to be a further limitation,
the access capacity of a peer at $x$ should exceed $\mu(x)$.
This is our default assumption here (access as a possible bottleneck is considered in
Section~\ref{sec:extend}).

The choice of a rate function given by \eqref{eqr0} is mainly for giving explicit results based on a simple distance-varying rate. Our results indeed apply for a wide range of rate functions (cf. Section \ref{sec:generalrate}).

{\bf{Data size.}} Each peer ${p}$ wants to get an amount $F_{p}>0$ of data. In the basic BitTorrent example where every peer wants to get the same file, $F_{p}$ would most naturally be modeled by a constant $F$ (the size of the file).
For the sake of mathematical tractability, in the analytical models, we follow the approach used by \cite{qiusrikant04} and assume that the $F_{p}$'s are independent and identically distributed random variables, with finite expectation $F=\E(F_{p})$.

{\bf Unaltruism.} When a peer has finished its download, it leaves the system immediately (instead of becoming a seeder).

{\bf {Connectivity limitation.}} The toy example assumes full mesh connectivity between peers, which is not a reasonable assumption. In practice, peers usually limit their neighborhood by using some \emph{overlay graph}. There are many ways to build an overlay, for instance by selecting only peers with sufficient qualities and/or by limiting their total number of neighbors. 
 In the basic model, we propose to define connectivity by a \emph{range} $R$: if $\Phi_t$ is the set of peers present at time $t$, then $N_t(x)=\{y\in \Phi_t, y\neq x, \text{s.t. } d(x,y)\leq R\}$.
The range can for instance originate from an ALTO-like connection management that prevents peers too far from one another to connect~\cite{bookchapter2009-Hossfeld}.
This constraint is even more meaningful in a wireless context, as it can represent the transmission range.

Other connectivity rules could be enforced, for instance random connectivity, but if the rate function decreases with the distance, it is only natural to enforce proximity in the overlay graph. Later in the paper (Section~\ref{sec:extend}), we propose another proximity-based variant where a constant number of closest peers is selected.

{\bf{Chunks.}} In order to focus on bandwidth aspects, the basic model follows the approach proposed by \cite{qiusrikant04}: we assume that the effect of chunk (un)availability between peers is that the download effectiveness is affected by some factor $\eta\leq 1$. In the following, we omit $\eta$ by assuming that file sizes are virtually scaled by a factor $\frac{1}{\eta}$. The actual value of $\eta$ will be investigated in Section \ref{sec:chunks}.

\vspace{-.1cm}
\section{Study of the Basic Model}
\vspace{-.1cm}
\label{sec:theory}

In this section, we give some theoretical results for the basic model when $D$ is a subdomain of
the Euclidean plane (or a two dimensional torus). 
We only give here the key ideas that explain the results. Detailed proofs are available in \cite{baccelli:inria-00615523}.

\vspace{-.1cm}
\subsection{Steady State}
\vspace{-.1cm}

The system's dynamics belongs to the class of spatial birth and death
processes \cite{preston75}. The births are the peer arrivals described above. The death
rate of a peer at $x$ is $\mu(x)/F$ with $\mu(x)$ given by formula (\ref{eqdeathrate}).
The first result is about the stability of the system:
\begin{prop}
\label{lem0}
If the domain $D$ in which the peers live is compact, then 
the spatial birth and death process (i.e. the positions of peers present at time $t$) forms a Markov process which is
ergodic for any birth rate $\lambda>0$.
\end{prop}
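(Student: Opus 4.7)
The plan is to view the process as a spatial birth-and-death process on the space $\mathcal{N}_f(D)$ of finite counting measures on the compact domain $D$, and to verify a Foster-Lyapunov-type drift condition of the kind used in Preston's framework \cite{preston75}. Births form a space-time Poisson rain of total intensity $\lambda|D|$ (a constant independent of the current configuration), and the death rate of a peer at $x$ in configuration $\Phi$ is $\mu(x)/F$ with $\mu(x)=\sum_{y\in\Phi\setminus\{x\},\,d(x,y)\le R} C/d(x,y)$. The natural Lyapunov function is simply $V(\Phi)=|\Phi|$, so it suffices to show that the total death rate $\sum_{x\in\Phi}\mu(x)/F$ dominates $\lambda|D|$ outside a finite set of configurations.

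The core step is a packing/pigeonhole argument to lower-bound the total death rate when $n=|\Phi|$ is large. Because $D$ is compact, it can be covered by a finite number $K=K(D,R)$ of balls of radius $R/2$. If $n\ge K$, the pigeonhole principle yields a ball containing at least $\lceil n/K\rceil$ peers, any two of which are at distance $\le R$ and hence contribute $f(d(x,y))\ge C/R$ to each other's rate. Summing over ordered pairs in this ball gives
\begin{equation*}
\sum_{x\in\Phi}\mu(x) \;\ge\; \frac{C}{R}\,\Bigl\lceil\frac{n}{K}\Bigr\rceil\Bigl(\Bigl\lceil\frac{n}{K}\Bigr\rceil-1\Bigr).
\end{equation*}
The total death rate therefore grows at least quadratically in $n$, while the birth rate stays at $\lambda|D|$, so the mean drift of $V$ is strictly negative outside a finite set of configurations $\{|\Phi|\le n_0\}$ for some threshold $n_0$ depending only on $\lambda,F,C,R,|D|$ and $K$.

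From this drift inequality one deduces ergodicity in two classical steps. First, non-explosion: since the birth rate is bounded and the drift of $|\Phi|$ is bounded above by $\lambda|D|$, the process is non-explosive and $\mathbb{E}[|\Phi_t|]$ is uniformly bounded in $t$ once the negative drift kicks in. Second, one applies the Foster-Lyapunov criterion for continuous-time Markov processes (in the form used in \cite{preston75} for spatial birth-and-death dynamics): the strictly negative drift outside the compact sublevel set $\{|\Phi|\le n_0\}$, together with the fact that from any configuration one can reach the empty configuration in finite time with positive probability (all peers can die before any birth in a small time window, an event of positive probability since there are finitely many peers and the arrival process is Poisson), gives Harris recurrence and hence the existence and uniqueness of an invariant probability measure.

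The only real subtlety is the packing lemma: one must be careful that the death contributions from within a single covering ball are counted correctly and do not vanish in degenerate configurations. Once $R$ is fixed and $D$ is compact, $K(D,R)$ is finite and the quadratic lower bound follows uniformly, so the argument goes through for every $\lambda>0$ without any smallness assumption on the arrival rate, which is precisely the point of the statement.
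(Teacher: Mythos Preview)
Your Lyapunov route is sound in its core: the packing step is exactly the right engine, since compactness of $D$ forces some ball of radius $R/2$ to contain order $n/K$ peers, yielding a total death rate that is at least quadratic in $|\Phi|$ while the total birth rate stays fixed at $\lambda|D|$. That inequality is what drives positive recurrence in Preston's framework.

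The paper, however, does not argue via a Lyapunov function: it states only that the proof is ``based on a domination argument,'' with details deferred to \cite{baccelli:inria-00615523}. The natural reading is that one couples $|\Phi_t|$ with an integer-valued birth--death chain having birth rate $\lambda|D|$ and death rate equal to the worst-case lower bound you derived; ergodicity of that dominating chain then transfers back by stochastic comparison. The two approaches share the same combinatorial heart (your pigeonhole bound), but the domination packaging sidesteps the need to exhibit petite sets on the measure-valued state space, which is precisely where your write-up is thinnest.

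On that last point there is a genuine, if small, gap. Your irreducibility claim that ``all peers can die before any birth in a small time window'' is false as stated: an isolated peer $x$ with no other peer in $B(x,R)$ has $\mu(x)=0$ and hence death rate zero, so it cannot die until someone else arrives within its range. The conclusion that the empty configuration is reachable is still true, but the path must pass through at least one additional birth (a newcomer lands in $B(x,R)$ with positive probability, after which both peers have positive death rates and can disappear before the next arrival). Either patch the argument along these lines, or switch to the domination formulation, which needs only the quadratic death-rate bound and no statement about reaching $\emptyset$.
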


The proof of Proposition \ref{lem0} is based on a domination argument. The claim also holds in $\R^2$ but requires a more sophisticated proof that will appear in a forthcoming paper. 

According to Proposition \ref{lem0},
the model admits a steady state regime where
the peers (in the basic model all leechers) form a
stationary and ergodic point process in $D$ \cite{DalVJon:88}.

We denote by $\beta_o$ the density of the peer (leecher) point process, 
by $\mu_o$ the mean rate of a typical peer, by $W_o$ the mean latency
of a typical peer, and by $N_o$ the mean number of peers in a ball of radius $R$ around a typical peer, all in the steady state regime of the P2P dynamics.

In the following, we will also consider several approximations of the main model:
\begin{itemize}
	\item a {\em fluid regime/limit}, where the corresponding quantities will be denoted
by a subscript $f$ (e.g. $\beta_f$);
\item a heuristic description with a hat notation (e.g. $\hat{\beta}_0$)
\end{itemize}

In any of these regimes, Little's law tells us that the average density verifies $\beta=\lambda W$. 

\vspace{-.1cm}
\subsection{Fluid Limit}
\vspace{-.1cm}
\label{subsec:simplefluid}

The fluid limit consists in assuming that, in the steady state regime,
peers are distributed according to an homogeneous Poisson point process
in $D$ such that the mean number of neighbors
of any peer is large. In particular, in the fluid limit, the presence of a single peer at a given point does not impact the
distribution of the other peers.

From Campbell's formula \cite{DalVJon:88},
the mean total rate of a typical location 
of space (or of a newcomer peer) is then
\vspace{-.15cm}
\begin{equation}
\label{meanrate}
\mu_f= \beta_f 2 \pi \int_{r= 0}^R (C/r) r dr=
\beta_f 2\pi C R.
\end{equation}
Now, the fluid limit assumes that a peer sees $\mu_f$ during its whole lifetime.
We get that the mean latency of a peer is 
\vspace{-.15cm}
\begin{equation}
W_f= \frac F{\mu_f}\text{.}
\label{eq:wffmuf}
\vspace{-.15cm}
\end{equation}
Using Little's law, one gets
\vspace{-.15cm}
\begin{equation}
\label{eq:equil}
\beta_f \mu_f= \lambda F\text{.}
\vspace{-.15cm}
\end{equation} 
From \eqref{meanrate}, \eqref{eq:wffmuf} and (\ref{eq:equil}), we have
\begin{equation}\beta_f= \sqrt{
\frac{\lambda F}{2 \pi CR}}, \ 
\mu_f = \sqrt{
\lambda F 2 \pi CR}, \ 
W_f = \sqrt{ \frac F{
\lambda 2 \pi CR}}.
\label{eq:cassimple}
\end{equation}

As we see in the expression
for the mean latency in \eqref{eq:cassimple}, the fluid limit exhibits 
the same super-scalability as the toy example: in spite
of the fact that the interactions are limited in range and depend on the distance, the mean latency decreases in $\frac 1 {\sqrt{\lambda}}$ when $\lambda$ tends to
infinity and everything else is fixed. 

Note that in the fluid limit, the mean number of peers in a ball of radius $R$ around a typical peer is
\vspace{-.2cm}
\begin{equation}
\label{eq:NN}
N_f = \pi R^2 \beta_f=\sqrt{\frac \pi 2} \sqrt{\frac{\lambda F R^3}{C}}\text{.}
\vspace{-.15cm}
\end{equation}

\vspace{-.15cm}
\subsection{Dimensional Analysis}

At this point of the paper, the fluid limit is a thought experiment,
not necessarily related to the actual model. Dimensional 
analysis \cite{pi} helps to connect the two.

In the basic model, the system has 4 parameters (the range $R$, the file size $F$, the peer arrival rate
$\lambda$ and the rate parameter $C$) expressed in 3 basic physical units (meters, bits, seconds). The $\pi$-theorem~\cite{pi} allows us to strip the problem from all its parameters but one. The idea is that the behavior of a system is not affected by the physical units used to measure it. By using proper unit changes \cite{baccelli:inria-00615523}, the system can be described by just one dimensionless parameter
\vspace{-.15cm}
\begin{equation}
\rho
= \frac{\lambda F R^3}{C}.
\label{eq:theorempi}
\vspace{-.1cm}
\end{equation}

The $\pi$-theorem leaves some freedom in the choice of the parameter. By noticing that $N_f=\sqrt{\frac \pi 2} \sqrt {\rho}$, we can use $N_f$, which has a physical interpretation (the number of neighbors predicted by the fluid limit), instead of $\rho$.

The $\pi$-theorem tells us that all systems that share the same
parameter $N_f$ are similar. Now consider the union of two independent systems that use the same parameters ($\lambda$, $F$, $C$, $R$): the real model, with latency $W_o$, and the fluid model, with latency $W_f$. The ratio $\frac{W_o}{W_f}$ is a dimensionless property of the overall system, therefore it is a function of $N_f$ only. 
In other words,
there exists a dimensionless function $M(N_f)$ such that:
\vspace{-.15cm}
\begin{equation}
W_o=M(N_f)W_f\text{.}
\label{eq:Wfull}
\vspace{-.15cm}
\end{equation}

From Little's law, we also deduce the density:
\vspace{-.15cm}
\begin{equation} 
\beta_o= \beta_f M(N_f)\text{.}
\label{eq:nonameinmind}
\vspace{-.15cm}
\end{equation}

Note that the dimensional reasoning made on the basic model can be extended to other models, for instance with different rate functions or connectivity rules. Equation \eqref{eq:nonameinmind} will remain true, although the shape of $M$ may change; in particular, if the system is described by more than 4 parameters, $M$ may depend on more than one variable.

To summarize, although the system in the basic model may be subject to complex interactions and is defined by four independent parameters, dimensional analysis allows one to express its general behavior through a one-parameter function
$M$ (unknown at this point), which expresses how far the actual system is from its
fluid limit.

\vspace{-.15cm}
\subsection{Fluid as a Bound}
\vspace{-.1cm}
We now give a better understanding of the behavior of the real system through the following theorems.

\begin{thm}[Fluid as a bound]\label{conjecture1}
$M\geq 1$. In other words, the fluid regime is actually a lower bound for the mean latency and the peer density.
\end{thm}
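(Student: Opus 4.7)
The plan is to derive $\beta_o\ge\beta_f$ (from which $W_o\ge W_f$ follows by Little's law, since $W=\beta/\lambda$ in both regimes) by combining an exact rate-conservation identity with a sub-Poisson estimate on the pair correlation of the stationary peer process.

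First, I would use stationarity to equate the mean birth and death rates per unit area. The death rate of a configuration $\Phi$ is $\sum_{x\in\Phi}\mu(x)/F$; using \eqref{eqdeathrate}, this equals $(2/F)\sum_{\{x,y\}\subset\Phi,\,d(x,y)\le R}f(d(x,y))$. Taking expectation under the stationary law, invoking Campbell's formula for factorial moments and translation invariance, then integrating in polar coordinates with $f(r)=C/r$ as in \eqref{eqr0}, one obtains the exact identity
\begin{equation*}
\lambda F \;=\; 2\pi C\int_{0}^{R}\rho^{(2)}_o(r)\,dr,
\end{equation*}
where $\rho^{(2)}_o$ is the reduced second-order factorial moment density of the stationary peer process. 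The fluid values in \eqref{eq:cassimple} correspond exactly to the substitution $\rho^{(2)}_f\equiv\beta_f^{2}$ in this identity.

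Second, I would prove the sub-Poisson bound
\begin{equation*}
\int_{0}^{R}\rho^{(2)}_o(r)\,dr \;\le\; R\,\beta_o^{2}.
\end{equation*}
The intuition is repulsion: two peers within range accelerate each other's departure through the interaction kernel $f$, so close pairs are consumed faster than in any non-interacting reference system, thinning the pair correlation below the Poisson value $\beta_o^{2}$. One concrete route is a stochastic coupling with an independent-lifetimes reference process, driven by the same Poisson arrival rain but assigning each peer a constant service rate $2\pi CR\beta_o$, and showing that the reference process stochastically dominates the actual pair counts inside any ball of radius~$R$; this is in the spirit of the domination argument underlying Proposition~\ref{lem0}. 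Plugging the resulting inequality into the conservation identity yields $\lambda F\le 2\pi CR\,\beta_o^{2}$, i.e. $\beta_o\ge\beta_f$, and hence $W_o\ge W_f$ and $M(N_f)\ge 1$.

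The main obstacle is the sub-Poisson estimate, which is genuinely a second-moment statement about an interacting spatial birth-death process: the right-hand side involves $\beta_o$, which is itself a functional of the pair correlation, so the bound must be obtained self-consistently without prior knowledge of $\beta_o$. The monotonicity of the kernel $f$ and of the death rates in the configuration makes the repulsion heuristic robust, but turning it into a clean stochastic domination (rather than merely a comparison of expectations) is the delicate step.
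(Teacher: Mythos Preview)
Your reduction is correct and coincides with the paper's: the rate-conservation identity $\lambda F=2\pi C\int_0^R\rho_o^{(2)}(r)\,dr$ is exactly the statement $\lambda F=\beta_o\mu_o$ written via the second factorial moment, and the target inequality $\int_0^R\rho_o^{(2)}(r)\,dr\le R\beta_o^2$ is equivalent to the paper's $\mu_o\le \beta_o\gamma$, i.e.\ the Palm rate of a typical peer does not exceed the stationary (Campbell) rate at a typical location. So the architecture of your argument is the same as the paper's.

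Where you diverge is in the proof of the key inequality. The paper does not build an auxiliary constant-rate process; it uses a \emph{stochastic intensity} (Palm vs.\ stationary) comparison: conditioning on a peer at the origin strictly increases the death intensity of each neighbour by $f(|y|)/F$, so the reduced Palm configuration around the origin is stochastically thinner than the stationary configuration, which immediately gives $\mu_o\le\beta_o\gamma$ without any reference to $\beta_o$ on the right-hand side beyond the stationary Campbell computation. This sidesteps the self-consistency problem you flag.

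Your proposed coupling, by contrast, does not close. The independent-lifetimes process with death rate $2\pi CR\beta_o/F$ has stationary intensity $\lambda F/(2\pi CR\beta_o)=\beta_f^2/\beta_o$, not $\beta_o$; so even a perfect domination of pair counts would bound $\rho_o^{(2)}$ by $(\beta_f^2/\beta_o)^2$ rather than $\beta_o^2$, which is the wrong quantity (and only helps once $\beta_o\ge\beta_f$ is already known). More basically, a pathwise coupling in the direction you want requires the actual death rate to dominate the constant one uniformly, but $\mu(x)/F$ can be $0$ (isolated peer) or arbitrarily large, so no constant works. The monotonicity/domination used in Proposition~\ref{lem0} goes the other way (bounding the process from above by one with smaller death rates) and yields stability, not sub-Poissonianity. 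In short: keep your first step, but replace the coupling by the Palm-versus-stationary repulsion argument; that is the missing idea.
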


The proof comes from a stochastic intensity argument. This property stems from the fact that as a peer uploads content to its neighbors, it makes them leave the system faster than if it did not upload anything. This is called a \emph{repulsion} effect. As a result, the mean download rate experienced by a typical peer (Palm distribution) is less than the mean download rate that would experience a virtual, non uploading, peer located at a typical location of $D$. Details can be found in \cite{baccelli:inria-00615523}.

\begin{thm}[Fluid as a limit]
\label{conjecture2}
When $N_f$ goes to infinity, $M$ goes to $1$, and the law of a typical peer latency converges weakly to an exponential random variable with parameter $1/W_f$.
\end{thm}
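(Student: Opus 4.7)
The plan is to show, as $N_f\to\infty$, two complementary statements: the Palm configuration of peers around a typical (tagged) peer converges to a homogeneous Poisson point process of intensity $\beta_f$, and the cumulative download rate experienced by that peer over its lifetime concentrates around the deterministic value $\mu_f$. Together with Theorem~\ref{conjecture1}, these yield $M\to 1$ and the weak convergence of the latency to an exponential random variable of parameter $1/W_f$.

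First I would quantify the repulsion effect that produced the bound $M\ge 1$. A tagged peer at the origin adds an extra death rate $C/(Fr)$ to each of its neighbors at distance $r\le R$. Compared with that neighbor's baseline death rate $\mu_o/F\ge \mu_f/F= 2\pi CR\beta_f/F$, this extra pressure is at most of order $1/(rR\beta_f)$; averaged over a typical neighbor distance $r=\Theta(R)$ it is $\Theta(1/N_f)$, and integrated over all $\Theta(N_f)$ neighbors it accounts for only $\Theta(1)$ extra departures per $W_f$ time-unit. A coupling argument in the spirit of the domination used in Proposition~\ref{lem0} should then sandwich the Palm distribution between two homogeneous Poisson processes of intensities $\beta_f(1\pm o(1))$, yielding $\beta_o/\beta_f\to 1$ and, by Little's law, $M=W_o/W_f\to 1$.

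For the distributional statement, the key step is a law of large numbers for the time-averaged rate $\bar\mu=\frac{1}{T}\int_0^T\mu(X,s)\D s$ seen by a tagged peer over its lifetime $T$. The instantaneous rate $\mu(X,t)$ has infinite variance because $\int_0^R (C/r)^2\, r\D r$ diverges, so pointwise concentration fails. However, during a lifetime of length $\Theta(W_f)$ there are of order $\lambda\pi R^2 W_f=N_f$ arrival/departure events within range, and any single close-encounter spike contributes only an $o(1)$ fraction of the integrated download $F$. Combined with the spatial concentration of order $1/\sqrt{N_f}$ of the previous step, this gives $\bar\mu/\mu_f\to 1$ in probability, so $T=F_p/\bar\mu$ converges in distribution to $F_p/\mu_f$. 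Under the standard assumption that $F_p$ is exponentially distributed with mean $F$, this limit is exponential with parameter $\mu_f/F=1/W_f$, as claimed.

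The main obstacle is making the concentration of $\bar\mu$ rigorous, since the joint evolution of the Palm neighborhood and the tagged peer's age creates non-trivial correlations (neighbors depart at a rate that itself depends on the tagged peer's presence and on the other neighbors). The cleanest route is probably to split the sum defining $\mu(X,t)$ at a slowly vanishing cutoff $r_0$: apply a Poisson-process LLN to the contribution of neighbors at distance $\ge r_0$, which has finite variance uniformly in $N_f$, and bound the residual contribution from neighbors at distance $<r_0$ in $L^1$ using the Palm intensity estimate of the first step. Choosing $r_0\to 0$ slowly enough that both terms vanish closes the argument.
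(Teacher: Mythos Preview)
Your plan is precisely the two-step program the paper sketches: (i) the repulsion of a single tagged peer becomes negligible as $N_f\to\infty$, and (ii) the rate seen by the tagged peer concentrates at the constant $\mu_f$, whence the latency is exponential with parameter $1/W_f$. The paper gives only this two-sentence idea and defers all technicalities to \cite{baccelli:inria-00615523}; your elaboration---the $\Theta(1/N_f)$ quantification of repulsion, the observation that $\int_0^R (C/r)^2 r\,\D r$ diverges so a truncation is needed, and the cutoff strategy---is consistent with and considerably more detailed than what appears here.
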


Theorem \ref{conjecture2} says that the fluid bound is tight:  when the number of neighbors predicted by the fluid limit tends towards infinity, the system behaves like its fluid limit.

The idea of the proof is that, when $N_f$ tends to infinity: (i) the traffic is high enough for the impact of one given peer, and thus the repulsion effect, to be neglected; (ii) the peers stay long enough to make the fluctuations slow and weak.
The fact that the rate at any point is constant in the limit implies that the latency is
exponential in the limit.

\vspace{-.15cm}
\subsection{Heuristic} 
\vspace{-.1cm}
\label{secheur}
For arbitrary  values of $N_f$, we propose to approximate $M$ by $\hat{M}$, the unique solution in $[1,\infty)$ of
\vspace{-.15cm}
\begin{equation}
\hat{M}^2\left(1-\frac{\hat M}{2N_f}\ln \left(1+\frac{2N_f}{\hat M}\right)\right)=1\text{.}
\label{eq:Mheuristic}
\vspace{-.1cm}
\end{equation}

In order to derive \eqref{eq:Mheuristic}, we use a heuristic factorization of the factorial
moment measure of order 3 of the stationary peer point process (see \cite{DalVJon:88} for
the definition of these measures) which is described in \cite{baccelli:inria-00615523}.
Informally, the method consists in computing an approximation
$\hat{u}_o$ of the  average rate of a peer assuming that: (i)  a
neighbor at distance $r$ from that peer  ``sees'' a rate
$\hat{u}_o+\frac{C}{r}$; (ii) in return, the peer ``sees'' at distance
$r$ a density of neighbors $\frac{\lambda F}{\hat{u}_o+\frac{C}{r}}$
(using \eqref{eq:equil}).

This heuristic is in line with Theorems \ref{conjecture1} and \ref{conjecture2}.

\begin{rem}
\label{rem:hardcore} \emph{When $N_f$ goes to $0$, the system admits another limit, called \emph{hard-core}, which was not presented here due to its lack of interest for real P2P systems. Nevertheless, the heuristic is in line with the hard-core limit too, which predicts that $M$ behaves like $\frac{1}{N_f}$ when $N_f$ goes to $0$ \cite{baccelli:inria-00615523}.}
\end{rem}

\vspace{-.15cm}
\subsection{Validation}
\vspace{-.1cm}
\label{subsec:simulations}

We validated and substantiated our results by means of simulations of our model. We used a discrete time simulator to evaluate the basic model for several values of $N_f$ (see \cite{baccelli:inria-00615523} for details). Key results are displayed in Figure 
\ref{fig:nm}, which allows us to check almost all results of this section in one look:\\
\noindent\textbullet~$M=1$ is a lower bound of the actual system (Theorem \ref{conjecture1});\\
\noindent\textbullet~as $N_f$ goes to $\infty$, the bound becomes tight (Theorem \ref{conjecture2});\\
\noindent\textbullet~the heuristic \eqref{eq:Mheuristic} gives a
good approximation of $M$;\\
\noindent\textbullet~as $N_f$ goes to $0$, the system behavior converges towards the hard-core limit $M=\frac{1}{N_f}$ (cf. Remark 2).

We also checked that for $N_f$ big enough, it is quite difficult to distinguish the system from a spatial birth and death process with birth parameter $\lambda$ and death parameter $1/W_f$, namely a Poisson point process of intensity $\beta_f$ (cf. \cite{baccelli:inria-00615523}).

\begin{figure}%
\begin{center}
\includegraphics[width=.85\columnwidth]{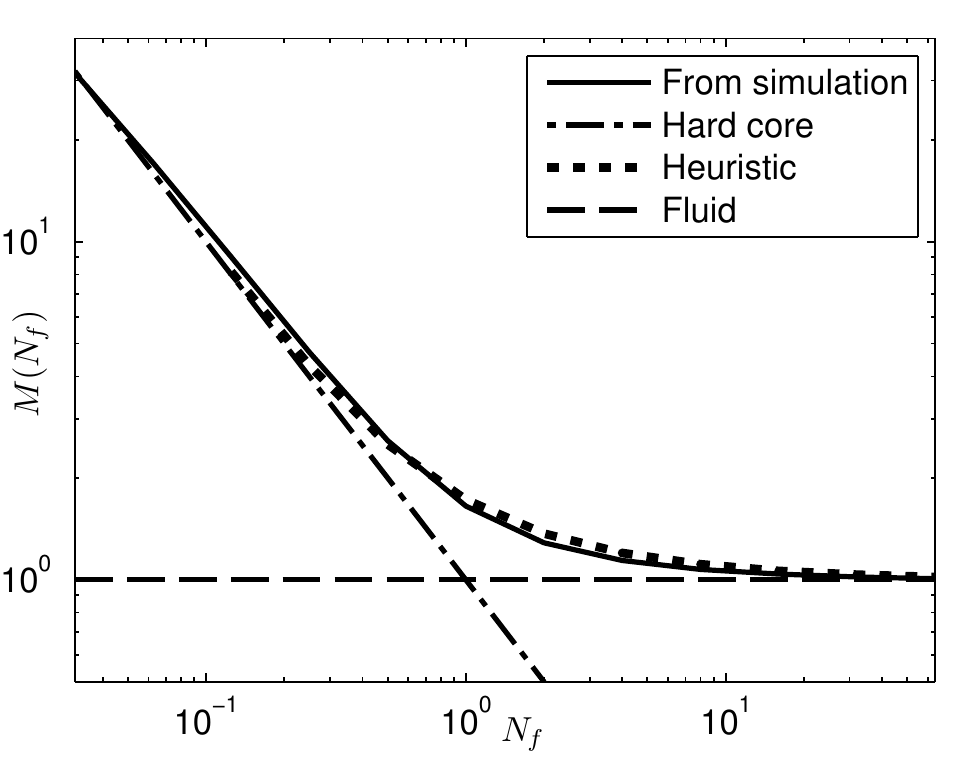}%
\vspace{-.35cm}
\caption{$M(N_f)$ in the basic model.}%
\label{fig:nm}%
\end{center}
\vspace{-.6cm}
\end{figure}


\section{Network Capacity Constraints}
\label{sec:lim}

Super-scalability naturally rises the question of the burden on the underlying network. The aim of this section is to determine the capacity required for the network elements in order to achieve the super-scalable regime identified above.

So far, the only assumptions on the network were that 1) the access is not the (only) bottleneck; 2) the network is a bottleneck, resulting into a
transfer rate between peers that depends on their distance.

This section introduces an abstract network model
on which the P2P traffic will be mapped through 
some natural shortest path routing mechanism. We
determine the mean {\em flow} that traverses a typical network element.
This flow of course depends on the protocols used in the network
which in turn determine the bit rate function.

For simplicity, we consider the fluid limit of the system.

\vspace{-.15cm}
\subsection{Network Model}
\vspace{-.1cm}

We consider an underlying network made of routers and links between them
where
\begin{compactitem}
\item routers form a realization of a spatial Poisson point process of intensity $\theta$;
\item links are the Delaunay edges (see e.g. \cite{FnT1}, Chapt. 4)
on this point process;
\item the capacity of a link is $E$;
\item each peer is directly connected to the closest router
and the path between two routers is a minimal path (with respect to
hop count) on the Delaunay graph.
\end{compactitem}

In this case, the number of links between two peers is
asymptotically proportional to the distance between them \cite{FnT1}.

Consider a straight line of the plane of length $l$. The average number of links that go through the line is $2l\sqrt{\theta}$, so the maximal traffic that can cross the line is $2El\sqrt{\theta}$. In other words, $\Xi:=2\sqrt{\theta} E$ is a parameter that describes the capacity of the network, expressed in $bits\cdot s^{-1}\cdot m^{-1}$.

\vspace{-.15cm}
\subsection{Flow Equations}
\vspace{-.1cm}

Let $\Psi(\varepsilon)$ denote the mean value of the 
P2P traffic that goes through a segment $S$ of length $\varepsilon$ in the fluid regime. By isotropy, we can focus on 
$S=[(0,-\frac \varepsilon 2), (0,\frac \varepsilon 2)]$.

A simple stochastic geometry argument shows that
\vspace{-.1cm}
\begin{equation}
\Psi=\Psi(1)= 4 \beta_f^2 \int_{0}^R r^2 f(r)  dr
\label{eq:psip2p}
\end{equation}
(see \cite{baccelli:inria-00615523}).
Using the fluid expression of the density
\vspace{-.15cm}
$$\beta_f= \sqrt{\frac{\lambda F}{2\pi\int_{0}^R r f(r) dr}}\text{,}$$
we get the key relation
\vspace{-.15cm}
\begin{equation}
\label{eq:flowf}
\Psi=\Psi(1)= \frac 2 \pi \lambda F\frac {\int_{0}^R r^2 f(r)  dr}
{\int_{0}^R r f(r) dr}.
\end{equation}
Equation \eqref{eq:flowf} holds for an arbitrary rate function $f$. For $f(r)= \frac C r$,
we get 
\vspace{-.2cm}
\begin{equation}
\label{eq:flowtcp}
\Psi=
2C \beta^2 \varepsilon R^2=
\frac 1 \pi \lambda F R.
\end{equation}

\vspace{-.2cm}
\subsection{Feasibility Condition}
\vspace{-.1cm}

Now, in order to simplify the evaluation of the P2P load on the underlying network,
we assume that (a) $\theta$ is large enough so that the hop-count
between two peers can be seen as proportional to their distance
and the flow between them as a straight line;
(b) Any rate smaller than $\Xi l$ can be transported through a
segment of length $l$. Under these assumptions, the condition for the network to sustain the rate generated by our model is 
 \vspace{-.25cm}
 \begin{equation}
\Psi < \Xi.
\label{eq:psixi}
\end{equation}

Note that the flow $\Psi$ in \eqref{eq:flowtcp} does not depend on $C$, so that condition \eqref{eq:psixi} does not either. This surprising result means that in the fluid limit, we can arbitrarily scale the individual rate of connections (thus decreasing the latency) without changing the burden on the underlying network. Of course, there is a flaw in that reasoning: increasing $C$ eventually impairs the validity of the fluid limit. As $C$ increases, $N_f$ gets smaller so we tend to leave the fluid limit and the approximations we used do not apply anymore~\cite{baccelli:inria-00615523}.

\section{Adding Chunks to the Model}
\label{sec:chunks}

This section contains a mathematical model and a simulation study
allowing one to quantify the impact of chunk availability.
An important result is that when both the number of chunks and
the parameter $N_f$ (introduced in Section \ref{sec:theory}) are
large, then the systems behaves as the chunkless fluid  
model of Section \ref{sec:theory}.

\vspace{-.15cm}
\subsection{Chunk Modeling}
\vspace{-.1cm}

We assume now that the file has a constant size $F$ and is divided into $K$ chunks of equal length. At any time, a peer is characterized by its \emph{collection}, which is the subset of chunks it fully possesses.
With respect to dimensional analysis, the system is now described by two parameters: $N_f$ and $K$.

For simplicity, we focus on the steady state taken in its fluid limit with respect to the peers, and we assume that the chunk scheduling policy is based on the following principles:

\textbullet~\emph{rarest chunk first}: when a peer can choose between chunks to download, it selects the one with fewest copies in its neighborhood; as in \cite{qiusrikant04}, we assume that this prevents the missing chunk syndrome and ensures that a peer with $k$ chunks has a collection of chunks which is independent of that of the other peers and uniform on the subsets of cardinality $k$ of the set $\{1,\ldots,K\}$;

\textbullet~ \emph{random peer order}: when it can download a given chunk from many neighbors within its range, a peer chooses one at random 
(the scheduling is not \emph{network-aware}).

There are two main ways to manage the download of simultaneous chunks: in the \emph{one-to-one} model, a peer gets one chunk from a single neighbor, while in the \emph{many-to-one} model, it can aggregate the resources of all neighbors that possess that chunk. The many-to-one approach gives better theoretical performance, as we will see below, but it requires a tight synchronization between peers that collaborate for a chunk, and thus may require an additional overhead in practice.

\vspace{-.15cm}
\subsection{Performance Study}
\vspace{-.1cm}

 An exhaustive study would require to consider the $2^F-1$ possible collections (although seeders are initially needed to bootstrap the system, we still consider a steady state with no seeder, so there is no full collection).
With the proposed assumptions, the impact of chunks mainly depends on the number of chunks already possessed by the peers. We say that a peer belongs to class $k$, for $0\leq k \leq K-1$, if it possesses exactly $k$ complete chunks. 
The following theorem gives the performance of each class in the fluid regime (by fluid regime, we mean 
i) a chunk regime where the independence and uniformity assumptions described above on the distribution of the chunks hold
and ii) a peer regime where the Poisson assumptions described in the preceding section hold).

\begin{thm}
\label{thm:chunks}
In the fluid limit, the mean total download rate of a peer of class $k$, $0\leq k \leq K$, is
\vspace{-.25cm}
\begin{equation}
\mu_k=\eta_k \mu_f\text{,}
\label{eq:mueta}
\vspace{-.15cm}
\end{equation}
\noindent where $\mu_f$ is given by \eqref{eq:cassimple}. Equation \eqref{eq:etafinal} gives the $\eta_k$'s for the many-to-one scheduling while  \eqref{eq:etafinalo} gives a lower bound for the one-to-one case.
\end{thm}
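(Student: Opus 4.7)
The plan is to extend the fluid-limit derivation of Section~\ref{subsec:simplefluid} by stratifying peers according to their chunk class and combining a per-class Little's law with Campbell's formula on the superposition of the resulting point processes.

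First I would observe that in the fluid regime each arriving peer deterministically traverses classes $0,1,\ldots,K-1$, spending a mean residence time $(F/K)/\mu_k$ in class~$k$. Because every peer passes through every class, the class-$k$ birth rate equals $\lambda$, and Little's law applied to each class separately yields
\begin{equation*}
\beta^{(k)}=\frac{\lambda F}{K\mu_k},\qquad \beta=\sum_{k=0}^{K-1}\beta^{(k)}.
\end{equation*}
Under the uniformity built into the rarest-chunk-first policy, the collection of a class-$j$ neighbor is an independent uniform $j$-subset of $\{1,\ldots,K\}$, so the probability that it holds at least one chunk missing from the tagged class-$k$ peer is
\begin{equation*}
p_{jk}=1-\binom{k}{j}/\binom{K}{j}\ (j\leq k),\qquad p_{jk}=1\ (j>k).
\end{equation*}

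For the many-to-one scheduling, every useful neighbor contributes its full rate $f(r)$ to the tagged peer, so Campbell's formula applied to the thinned per-class point processes gives
\begin{equation*}
\mu_k=\sum_{j=0}^{K-1}\beta^{(j)}p_{jk}\cdot 2\pi\!\int_0^R\!\!f(r)\,r\,dr=2\pi CR\sum_{j}\beta^{(j)}p_{jk}.
\end{equation*}
Substituting the Little relation, using $\mu_f^2=2\pi CR\,\lambda F$ from \eqref{eq:cassimple}, and writing $\mu_k=\eta_k\mu_f$ produces the coupled fixed-point system
\begin{equation*}
\eta_k=\frac{1}{K}\sum_{j=0}^{K-1}\frac{p_{jk}}{\eta_j},
\end{equation*}
which I expect to match \eqref{eq:etafinal} after rearrangement. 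Existence and uniqueness of a positive solution should follow from a monotonicity argument, and the structure $p_{jk}=1$ for $j>k$ suggests a triangular recursion that produces the announced explicit form.

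For the one-to-one case, a useful neighbor can serve only one chunk transfer at a time, so its rate $f(r)$ is shared among the downloaders within its range competing for the chunks it holds. A lower bound on $\mu_k$ follows from a contention-thinning argument: the share of the source's rate reaching the tagged class-$k$ peer is at least the probability that a uniform selection among all in-range competitors falls on our peer, which turns the Campbell identity above into an inequality with a conservative contention factor and yields \eqref{eq:etafinalo}. The principal obstacle is this one-to-one part: the scheduling induces correlations between simultaneous transfers that a mean-field computation cannot capture exactly, so the real work lies in turning contention into an honest monotonic inequality, presumably via a coupling that restricts each source to at most one downloader and accounts for the resulting thinning factor. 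A secondary technical point in the many-to-one case is verifying uniqueness of the coupled system and reducing it to the explicit form in \eqref{eq:etafinal}.
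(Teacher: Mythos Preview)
Your many-to-one derivation is essentially the paper's: the per-class Little relation $\beta_j=\lambda F/(K\mu_j)$ combined with Campbell on the class-thinned process gives $\mu_k=2\pi CR\sum_j \beta_j p_{jk}$, and after normalising by $\mu_f$ you land exactly on
\[
\eta_k=\frac{1}{K}\sum_{j=0}^{K-1}\frac{p_{jk}}{\eta_j}.
\]
One correction of expectation: this fixed-point system \emph{is} \eqref{eq:etafinal}. There is no further explicit closed form to extract, and the system is not triangular (for $j>k$ one has $p_{jk}=1$, not $0$, so $\eta_k$ depends on all $\eta_j$). The paper simply solves it by iteration.

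The one-to-one part has a genuine gap: you have misidentified the constraint. In the paper, ``one-to-one'' means that the \emph{downloader} pulls each missing chunk from a single source; it is not a source-side sharing limitation. Your contention-thinning picture (a source splitting $f(r)$ among competing downloaders) models a different bottleneck and will not produce \eqref{eq:etafinalo}. The paper's argument is an occupancy bound from the downloader's viewpoint: in the worst case each of the $N_c$ useful neighbours carries exactly one of the $K-k$ missing chunks, uniformly at random, so the expected number of \emph{distinct} missing chunks available---hence the number of parallel single-source transfers---is $(K-k)\bigl(1-(1-\tfrac{1}{K-k})^{N_c}\bigr)$. Multiplying by the average per-transfer rate $2C/R$ and lower-bounding $N_c$ by $N_f$ gives \eqref{eq:etafinalo}. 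Replacing your source-contention heuristic by this coupon-collector step is the missing idea.
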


\begin{proof}
In view of our assumptions on the scheduling and on the distribution of peers,  the average rate of a given transfer is just the average over the range, that is
$\frac{1}{\pi R^2}\int_0^R2\pi r(C/r)dr= \frac{2C}{R}\text{.}$

Now, we consider a peer $p$ of class $k$ with a neighbor $q$ of class $j$. In view of our assumptions on the distribution of chunks, the probability that $q$ has at least one chunk that $p$ wants, which coincides with the probability that the set of
chunks of $q$ is not included in that of $p$, is
\vspace{-.15cm}
\begin{equation}
z(k,j)=1-{\binom k j }/{\binom K j}\text{,}
\label{eq:etakj1}
\vspace{-.15cm}
\end{equation}
with the convention that ${\binom k j }=0$ for $j>k$.
Thus, if $\beta_j$ denotes the density of class $j$, the number of neighbors from whom a given peer of class $k$ may download one chunk is
\vspace{-.15cm}
\begin{equation}
N_c=\pi R^2 \sum_{j=0}^{K-1}\beta_j z(k,j)\text{.}
\label{eq:nc}
\vspace{-.15cm}
\end{equation}

In the many-to-one model, we deduce that the average download is 
\vspace{-.5cm}
\begin{equation}
\mu_k=\frac{2C}{R}\pi R^2 \sum_{j=0}^{K-1}\beta_j z(k,j)\text{.}
\label{eq:muk}
\vspace{-.15cm}
\end{equation}

We notice then that for class $k$, \eqref{eq:equil} becomes $\beta_k=\frac{\lambda F}{K\mu_k}$.
To conclude, we define $\eta_k:=\frac{\mu_k}{\mu_f}$, where $\mu_f$ is given by \eqref{eq:cassimple}. If we replace $\beta_k$ by $\frac{\lambda F}{K\mu_k}$  in \eqref{eq:muk} and use the relationships from \eqref{eq:cassimple} and \eqref{eq:NN}, we get
\vspace{-.5cm}
\begin{equation}
\eta_k=\frac{1}{K}\sum_{j=0}^{K-1}\frac{z(k,j)}{\eta_j}\text{.}
\label{eq:etafinal}
\vspace{-.2cm}
\end{equation}

In the one-to-one model, a peer cannot download
a chunk from more than one peer. In the worst case where each of the $N_c$ peers has at most one of the desired chunks, the probability that $p$ can download any given desired chunk is $1-(1-\frac{1}{K-k})^{N_c}$, so that the average number of chunks downloaded is 
\vspace{-.15cm}
\begin{equation}
(K-k)\left(1-(1-\frac{1}{K-k})^{N_c}\right)\text{.}
\label{eq:probchunk}
\vspace{-.15cm}
\end{equation}

Adapting \eqref{eq:muk}, using the same variable changes as for the many-to-one case, and using $N_f$ as a lower bound for $N_c$, one gets:
\vspace{-.35cm}
\begin{equation}
\eta_k\geq\frac{K-k}{N_f}\left(1-(1-\frac{1}{K-k})^{N_f}\right)\text{.}
\label{eq:etafinalo}
\vspace{-.35cm}
\end{equation}
\end{proof}

Equation \eqref{eq:etafinal} is easily solved using fixed-point iterations. Notice that the computation depends solely on $K$ in the many-to-one model and on $K$ and $N_f$ in the one-to-one model.
If $\eta$ denotes the harmonic mean of the $\eta_k$'s, we verify that the overall latency $W$ is $\frac{W_f}{\eta}$. Therefore, as for the model proposed in \cite{qiusrikant04}, $\eta$ can be used to scale the results of the basic model and ignore the underlying, possibly complex, chunk exchange mechanisms.

\begin{rem}
\emph{In the basic model we had $W=M(N_f)W_f$, so we can interpret $\frac{1}{\eta}$ as $M(N_f,K)$ in the case $N_f\gg 1$.}
\end{rem}

We now study the behavior of $\eta$ in the fluid limit.

\begin{thm}
\label{thm:etaapprox}
In the many-to-one model, and in the one-to-one if $N_f$ is large enough yet fixed, we have
\vspace{-.2cm}
\begin{equation}
\eta \xrightarrow[K\rightarrow \infty]{}  1.
\label{eq:etalimit}
\vspace{-.15cm}
\end{equation}
\end{thm}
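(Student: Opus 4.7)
The plan is to exploit that $\eta$ is the harmonic mean of the $\eta_k$'s, so $\eta\to 1$ reduces to showing $\sum_{k=0}^{K-1}1/\eta_k = K(1+o(1))$. First I would record the easy upper bound $\eta\le 1$: Little's law $\beta_k = \lambda F/(K\mu_k)$ aggregates to $\sum_k\beta_k = \beta_f/\eta$, and then $\mu_k = 2\pi CR\sum_j\beta_j z(k,j)\le 2\pi CR\sum_j\beta_j = \mu_f/\eta$ gives $\eta_k\le 1/\eta$, whence $\sum_k 1/\eta_k\ge K\eta$ and $\eta^2\le 1$.

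For the one-to-one model, I would use the explicit bound \eqref{eq:etafinalo} directly. Writing $m=K-k$ for the number of chunks left to download, I split $\sum 1/\eta_k$ at $m=N_f$. In the bulk regime $m>N_f$, a Taylor expansion of $(1-1/m)^{N_f}$ yields $\eta_k\ge 1 - O(N_f/m)$, contributing $K-N_f + O(N_f\ln(K/N_f))$ to the sum. In the boundary regime $m\le N_f$, at most $N_f$ classes each satisfy $\eta_k\ge (m/N_f)(1-e^{-N_f/m}) = \Omega(m/N_f)$, contributing $O(N_f\ln N_f)$. In total $\sum_k 1/\eta_k = K + O(N_f\ln K)$, so for $N_f$ fixed, $\eta\ge 1 - O(N_f\ln K/K)\to 1$ as $K\to\infty$.

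For the many-to-one model, I would analyze the fixed-point equation $\eta_k=K^{-1}\sum_j z(k,j)/\eta_j$. The key observation is $\binom{k}{j}/\binom{K}{j}\le (k/K)^j$, which yields $\bar z(k):=K^{-1}\sum_j z(k,j)\ge 1 - 1/(K-k)$; in particular $\bar z(k)\to 1$ uniformly over $k\le K-\sqrt{K}$, while a Riemann-sum computation gives $\bar z(K-m) = m/(m+1) + O(1/K)$ for the $O(\sqrt{K})$ boundary classes. Heuristically, passing to the continuum limit $x=k/K$ the equation becomes $\eta(x)=\int_0^1 dy/\eta(y)$, whose unique positive solution is $\eta\equiv 1$. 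To rigorize this, I would feed $\eta_j\le 1/\eta$ back into the fixed point to obtain $\eta_k\ge \eta\,\bar z(k)$, and combine this with the boundary asymptotics to control the harmonic mean, the boundary contributing at most $O(\sqrt K)$ terms of size $O(1)$ and the bulk contributing $K(1+O(1/\sqrt K))$.

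The main obstacle is the many-to-one case: the inequality $\eta_k\ge\eta\,\bar z(k)$ is self-referential (both sides involve $\eta$), so plugging it directly into the harmonic-mean identity reproduces only the tautology $\sum_k 1/\bar z(k)\ge K$. Breaking this circularity calls either for a perturbative expansion $\eta_k=1-\varepsilon_k$ with linearized equation $(I+M)\varepsilon\approx\vec 1-\vec{\bar z}$ for $M_{kj}=z(k,j)/K$, exploiting $\|\vec 1-\vec{\bar z}\|_1=O(\ln K)$ to bound $\|\varepsilon\|_1$, or for an iterative refinement alternating improvements of upper and lower bounds on the $\eta_k$'s. Either route demands careful control of the $O(\sqrt K)$ boundary classes, whose $\eta_k$ do not individually tend to $1$ but whose aggregate contribution to $\sum_k 1/\eta_k$ must be kept at $o(K)$.
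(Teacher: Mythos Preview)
Your proposal and the paper's argument agree on the essentials. For the one-to-one case you use exactly the same input as the paper, namely the lower bound \eqref{eq:etafinalo}, and you give a sharper quantitative version ($\sum_k 1/\eta_k = K + O(N_f\ln K)$) where the paper simply says ``take the limit of \eqref{eq:etafinalo} as $K\to\infty$.'' Your derivation of $\eta\le 1$ via $\eta_k\le 1/\eta$ is also correct and more explicit than the paper's one-line physical justification.

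For the many-to-one case, the paper's own argument is only a sketch: it passes formally to the continuum limit, obtains the functional equation $\eta(x)=\int_0^1 dy/\eta(y)$, and notes that the unique positive solution is $\eta\equiv 1$. You identify this same limiting equation, so at the level of the paper's sketch you are done. The ``main obstacle'' you then describe---the circularity of $\eta_k\ge\eta\,\bar z(k)$ and the need for a perturbative or iterative scheme to control the boundary classes---is a genuine issue if one wants a fully rigorous proof, but the paper does not address it either; it simply asserts the scaling limit without justifying the interchange. So your concerns go beyond what the paper provides rather than diverging from it, and the routes you suggest (linearization around $\eta_k=1$, or alternating bound refinement) are reasonable ways to fill the gap the paper leaves open.
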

\begin{proof} [Sketch of Proof]
For the many-to-one model, we use a scaling technique that consists in letting $K$ go to infinity
so as to make the $\eta_k$ converge toward a continuous function in $[0,1)$.
The basic ingredient is the fact that the function $z$ defined in (\ref{eq:etakj1})
converges pointwise to $1$ under this scaling. The scaling of \eqref{eq:etafinal} is
\vspace{-.3cm}
\begin{equation}
\eta(x)=\int_0^{1}\frac{1}{\eta(y)}dy\text{.}
\vspace{-.1cm}
\end{equation}
It is not difficult to show that $\eta=1$ is the unique positive solution
solution of this functional equation,
which proves \eqref{eq:etalimit} for the many-to-one case.

In the one-to-one model, \eqref{eq:etalimit} is straightforward when noticing that $\eta$ is always smaller than or equal to $1$ (the overall download capacity is lowered because of availability issues). The limit of \eqref{eq:etafinalo} when $K$ tends to $\infty$ allows one to conclude.
\end{proof}

\begin{figure}%
\centering
\includegraphics[width=.8\columnwidth]{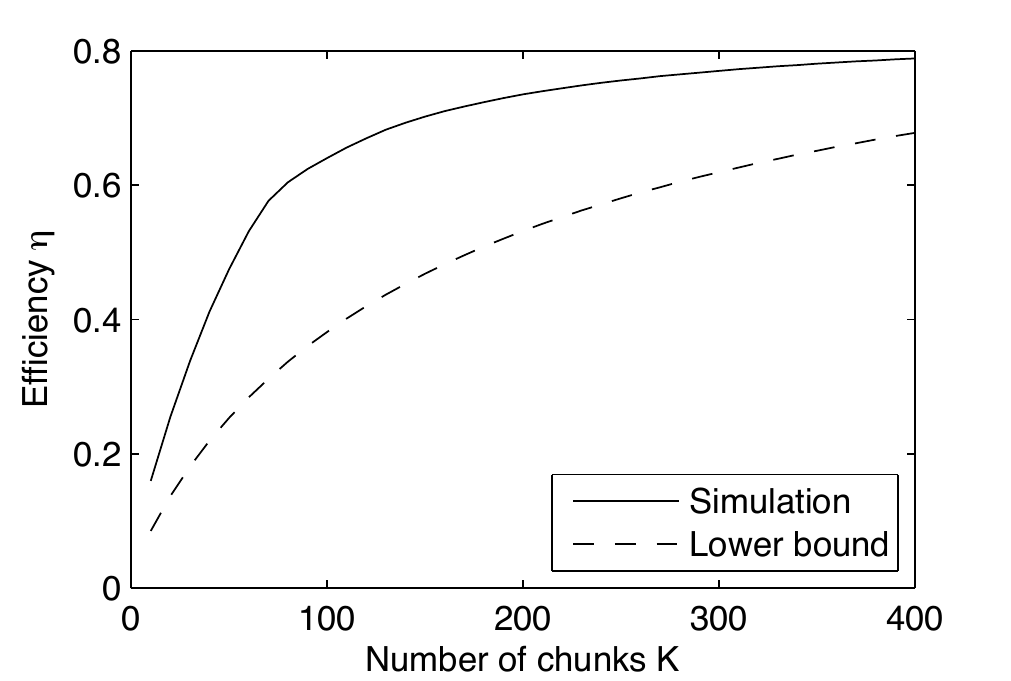}%
\vspace{-.25cm}
\caption{Efficiency $\eta$ as a function of $K$ ($N_f=40$). \label{fig:eta}}%
\vspace{-.5cm}
\end{figure}

The fact that a peer cannot upload a given chunk from more than one peer badly impacts the performance of the one-to-one model, compared to many-to-one. This is especially true at the end of the download, when a peer may have more useful neighbors than remaining chunks.
This fact was empirically observed by Bram Cohen in his original BitTorrent design, where he proposed to use one-to-one (which is easier to maintain) most of the time except for the very few last chunks, where peers switch to many-to-one (endgame behavior \cite{bittorrent}).

\subsection{Validation}
\vspace{-.1cm}

We simulate the system with chunks in order to substantiate our claims, using a simple rarest first chunk selection and random peer selection like the one proposed. Synchronization is one-to-one.

First, we validate the assumption on the distribution of chunks by checking the impact of the presence of a chunk at some peer on the presence of this chunk at the neighboring peers. For instance, for $N_f=40$, $K=200$,
we verified that a peer sees in average $29.22$ copies of a chunk it possesses (itself not included), and 
 $29.10$ copies of a chunk it misses. This and more detailed correlation analysis (that cannot be included here due to space limitation) are quite conclusive.

We launched many trials to verify our results.  Figure \ref{fig:eta} displays the value of $\eta$ for several values of $K$. One verifies that the system has a better performance than the proposed lower bound, and the right behavior when $K$ grows.

\vspace{-.15cm}
\subsection{Conclusion on Chunks}
\vspace{-.1cm}

We showed (through analysis and simulation) that in the fluid limit ($N_f\gg 1$), when $K\gg 1$, the system with chunks behaves like the fluid chunkless model of Section \ref{sec:theory} with an appropriate efficiency parameter $\eta$, which we described.

The parameter $\eta$ can be close to $1$ if $K$ is large enough, with $N_f$ being fixed in the one-to-one model. In this last case, super-scalability could be impacted: as $\lambda$ increases, so does $N_f$ and if $K$ is fixed, the lower bound converges to $0$ (simulations confirm that this is also the case for $\eta$). The possible workarounds for this issue are: to use many-to-one, or equivalently one-to-one with endgame, to get rid of the last chunks bottleneck; to limit the number of neighbors in order to keep $N_f$ bounded (this will be detailed in Section \ref{subsec:limiteddegree}).

\vspace{-.1cm}
\section{Extensions of the Basic Model}\label{sec:extend}
\vspace{-.1cm}

The aim of this section is to show that our analysis can be extended in several ways
and take important practical phenomena into account.
Unless otherwise stated, we will place ourselves in the fluid regime, but the dimensional analysis approach can be used with all extensions to relate the fluid limit to the real system through some function $M$. As we have seen when introducing the chunks, if an extension introduces new parameters, $M$ can be a function of several dimensionless variables (replacing $N_f$).

For sake of clarity, the proposed extensions are presented separately, but interleaving extensions is straightforward in the fluid limit. Outside the fluid limit, the complexity of mixed extensions will mainly depend on the complexity of the corresponding $M$ function.

\vspace{-.2cm}
\subsection{More General Rate Functions}
\vspace{-.1cm}
\label{sec:generalrate}

\begin{table}
\begin{threeparttable}
\caption{Some rate functions with explicit strength $\gamma$ \label{tab:gamma}}
\begin{tabular}{|c|p{2.4cm}|c|}
\hline
$f(r)$ & Interpretation & $\gamma=2\pi\int_0^R rf(r)dr$ \\
\hline
$\frac{C}{r}$ & TCP-like & $2\pi CR$ \\
\hline
$U$ & UDP-like (constant)  & $\pi UR^2$ \\
\hline
$\frac Cr \wedge U$ & TCP with per-flow limitation & $
\pi \left(2CR-\frac{C^2} U\right)
$\tnote{a}  \\
\hline
$\frac{C}{r+q}$ & TCP with offset & $2\pi C\left(R-q\ln(1+\frac{R}{q})\right)$ \\
\hline
$\frac{C}{r}-o$ & TCP with overhead & $\pi R \left(2C-oR\right)$\tnote{b} \\
\hline
$\frac 1 2 \ln\left(1+\frac C {r^\alpha}\right)$ & SNR Wireless &
$
\frac{\pi^2 C^{\frac 2 \alpha}}{
2 \sin\left(\frac {2\pi}{\alpha}\right)}\text{ for $R=\infty$\tnote{c}}$\\
\hline
\end{tabular}
\begin{tablenotes}
\item[a] For $C\leq UR$; $C\geq UR$ is the UDP-like case.
\item[b] For $\frac{C}{R}\geq o$; otherwise replace $R$ by $\frac{C}{o}$.
\item[c] There is no closed form for$R<\infty$ in most cases. However, for $\alpha=4$, we have $\gamma = \pi \left( { {R^2}\log (1 + \frac C {R^4})
+ \sqrt{C} \arctan (\frac{R^2}{\sqrt{C}})} \right)$.
\end{tablenotes}
\end{threeparttable}
\vspace{-.6cm}
\end{table}

While we focused for the basic model on the rate function \eqref{eqr0}, all our results can easily be generalized to any rate function $f$ such that $\int_{r=0}^Rrf(r)dr<\infty$.

For a rate function $f$, the fluid rate Equation \eqref{meanrate} becomes
\vspace{-.15cm}
\begin{equation}
\mu_f=\beta_f \gamma\text{, with }\gamma=2 \pi \int_{r=0}^Rrf(r)dr\text{.}
\label{eq:generalrate}
\vspace{-.15cm}
\end{equation}
The characteristic $\gamma$, which is expressed in $bits\cdot s^{-1}\cdot m^2$, is the sum of $f$ over its range, so we call it the \emph{strength} of $f$. Once $\gamma$ is known, we can generalize \eqref{eq:cassimple} as
\vspace{-.1cm}
\begin{eqnarray}
\beta_f  =  \sqrt{ \frac{\lambda F}{\gamma}},\
\mu_f  =  \sqrt{ \lambda F \gamma},\
W_f   =  \sqrt{ \frac F{ \lambda \gamma}}.
\label{eq:generalperf}
\vspace{-.1cm}
\end{eqnarray}

We observe that the scaling in $\frac 1 {\sqrt{\lambda}}$ still holds. For the rest of the paper, we use directly the strength $\gamma$ instead of \eqref{eqr0}.

Table \ref{tab:gamma} gives the strength of the following rate functions:
\begin{itemize}
	\item The TCP-like example of the basic model;
	\item Constant rate function, where each flow has a bandwidth $U$. This corresponds for instance to the case where the transport protocol is UDP and bandwidth is limited by the application;
	\item Mix of the above, where the rate is TCP-like with an upper bound set by the application;
	\item TCP-like with some additive offset $q$ that accounts for the mean delay in the two access networks;
	\item Capacity of a wireless AWGN channel.
\end{itemize}

 In most cases, the heuristic approximation $\hat M$ can be adapted to $f$. For instance, a constant $f$ leads to (cf \cite {baccelli:inria-00615523})
\begin{equation}
\hat M=\sqrt{1+\left(\frac{1}{2N_f}\right)^2}+\frac{1}{2N_f}\text{.}
\label{eq:Mheuusoluce}
\vspace{-.1cm}
\end{equation}

If $R=\infty$, the system parameter $N_f=\pi R^2\beta_f$ is not properly defined anymore, which impairs a direct introduction of $M$. If  $\int_{r>0}r^2f(r)dr<\infty$, a simple workaround is to use the following ratio
(already considered in \eqref{eq:flowf})
\vspace{-.1cm}
\begin{equation}
\tilde{R}:=\frac {\int_{r>0} r^2 f(r)  dr}
{\int_{r>0} r f(r) dr}\vspace{-.1cm}\end{equation}
instead of $R$ and to extend the dimensional analysis accordingly
($\tilde{R}$ being interpreted as the \emph{typical} range of $f$).
If $\int_{r>0}r^2f(r)dr=\infty$, then according to \eqref{eq:psip2p} the traffic load intensity is infinite, so the rate function is probably ill-defined with respect to the underlying, capacity-limited, network.

\vspace{-.25cm}
\subsection{Permanent Servers}
\vspace{-.1cm}

The system may benefit from servers, or eternal 
seeders\footnote{This is distinct from the case where leechers can seed for some time after they complete their download, which is addressed in \ref{subsec:seeders}}. For instance they can be introduced to: (i) solve the issue of chunk availability by being able to provide any asked chunk; (ii) allow to consider hybrid systems that combine classical server solutions and a P2P approach; (iii) avoid the fact that in our model, the latency goes to $\infty$ when $\lambda$ goes to $0$ (non-popular content syndrome).

We focus on the basic model.

The servers are characterized by their density of bitrate $U_C$, expressed 
in $bit\cdot s^{-1}\cdot m^{-2}$, so that if $\beta_f$ is the peer density, 
a typical peer gets $\frac{U_C}{\beta_f}$ from the servers.

To describe the system, we need another dimensionless parameter in addition to $N_f$. We conveniently choose $\chi:=\frac{U_C}{\lambda F}$, which expresses the ratio between the density of rate needed by the system and the density of rate provided by the servers. If $\chi\geq 1$, then the permanent rate from servers is sufficient to serve the peers, otherwise P2P transfer is needed for stability.

Let us focus on the two limiting cases: the system is mainly client/server ($\chi\gg 1$), or the system is mainly P2P with a small server-assistance ($\chi\ll 1$). The case $\chi\ll 1$ can be seen as a scenario where servers are here mainly for insuring chunk availability.

If $\chi\gg 1$, then almost all resources come from the servers. This implies that the point process is hard--core (a peer sees almost no neighbor in its range while it is a leecher, otherwise the P2P traffic would not be negligible), so a peer can collect all the available bandwidth in its range. We deduce the average latency:
\vspace{-.3cm}
\begin{equation}
W_{C}\approx \frac{F}{\pi R^2 U_C}.
\label{eq:whc}
\end{equation}

For $\chi \ll 1$, in the fluid limit ($N_f\gg 1$), we can adapt \eqref{meanrate}, which gives
\vspace{-.3cm}
\begin{equation}
\mu_{f,C}=\beta_{f,C}\gamma+\frac{U_C}{\beta_{f,C}}\text{,}
\end{equation}
from which we deduce 
\vspace{-.15cm}
\begin{equation}
W_{f,C} = \sqrt{ \frac {F-\frac{U_C}{\lambda}}{
\lambda \gamma}}=W_f\sqrt{1-\chi}\approx W_f.
\label{eq:wwithservers}
\vspace{-.2cm}
\end{equation}

\vspace{-.2cm}
\subsection{Abandonment}
\vspace{-.1cm}
Here we consider the case where all leechers have some abandonment rate.
Let $a$ denote this rate. 
In the stationary state, we have $\lambda=(\frac{\mu_f}{F}+a)\beta_f$.
From \eqref{eq:generalrate}, we deduce $\mu_f^2+\mu_faF=\lambda F \gamma$. The positive solution of this equation is 
\vspace{-.1cm}
\begin{equation}
\mu_f=\sqrt{\lambda F \gamma+\left(\frac{aF}{2}\right)^2}-\frac{aF}{2}\text{.}
\label{eq:abandonment}
\end{equation}
The analysis can hence be extended without difficulties. For instance, the abandonment ratio is given by $\frac{aF}{\mu_f+aF}$.

\subsection{Per Peer Rate Limitation} 
\vspace{-.1cm}
Due to the asymmetric nature of certain access networks (e.g. ADSL), the uplink rate is often the
most important access rate limitation. Let $U$ denote (here) the average upload capacity of a peer;
then the average rate in the fluid limit should be such that
\begin{equation}
\mu_f=\sqrt{\lambda F \gamma} \le U.
\label{eq:caspeeraccess}
\end{equation}
If $\gamma=2\pi R C$ (basic model), a dimensioning rule could be to choose $R=\frac{U^2}{\lambda F 2 \pi C}$ so that all available capacity is used.
\vspace{-.1cm}
\subsection{Leechers and Seeders}
\vspace{-.1cm}
\label{subsec:seeders}
When a leecher has obtained all its chunks, it can become a seeder and remains such 
for a duration $T_S$.
In this setting, there is a density of seeders $\lambda T_S$ in the stationary regime.

In the fluid limit with seeders, \eqref{eq:generalrate} becomes
\begin{equation}
\label{meanratesee}
\mu_{f,S}=
(\beta_{f,S}+\lambda T_S) \gamma\text{.}
\end{equation}
Using \eqref{eq:equil} and $F=W_{f,S}\mu_{f,S}$, we get
\begin{equation}
W_{f,S}^2+W_{f,S}T_S=W_{f}^2\text{.}
\label{eq:wfseeder}
\end{equation}
The positive solution of this equation is
\begin{equation}
W_{f,S}=\sqrt{W_{f}^2+\left(\frac{T_S}{2}\right)^2}-\frac{T_S}{2}.
\label{eq:wfseedersoluce}
\end{equation}
In particular, we have $W_{f,S}\approx W_{f}$ for $T_S\ll W_{f}$ and  $W_{f,S}\approx \frac{W^2_{f}}{T_S}$ for $T_s\gg W_{f}$.

\begin{rem}
\emph{ Seeders can also greatly improve the performance in the case where $N_f$ is small, by ensuring that a leecher can find peers in its range with high probability (cf \cite{baccelli:inria-00615523} for more details).}
\end{rem}

\subsection{Limited Degree}
\label{subsec:limiteddegree}
In the basic model, we limit connectivity by range for mathematical tractability, but in practice, most P2P systems use a limitation based on the number of connections per peer.

However, degree limited connectivity can be linked to our model. Consider that a ALTO-like mechanism allows each peer to connect to its $L$ {\em nearest peers}. If $L$ is high enough, it will be identified to $N_f$ and the behavior will be fluid. The degree connectivity can then be approximated by a range connectivity such that $L$, $R$ and $\beta$ verify
\begin{equation}
\pi R^2 \beta =L\text{.}
\label{eq:pir2l}
\end{equation}

Using \eqref{eq:equil} and \eqref{eq:generalrate}, we get an equation that $\beta$ must verify:
\begin{equation}
\beta^2\gamma(\beta)=\lambda F\text{,}
\label{eq:adapt1}
\end{equation}
\noindent where $\gamma(\beta)$ is the strength of the rate function $f$ when using $R=\sqrt{\frac{L}{\pi\beta}}$ (see for instance Table \ref{tab:gamma}).

Once $\beta$ is known, we deduce $W=\frac{\beta}{\lambda}$. For instance, using the rate function of the basic model, one gets
\begin{equation}
W=\left(\frac{F}{2C}\right)^{\frac{2}{3}}\left(\frac{1}{\pi \lambda L}\right)^{\frac{1}{3}}\text{.}
\label{eq:TCPdegree}
\end{equation}

We observe that the super-scalability property still exists (although slightly diminished), despite the fact that each peer has a limited number of neighbors. This is a consequence of having a decreasing $f$ function: as the arrival rate increases, so does the density and thus the rate of individual connections. To compare with, a system with a constant rate function like in the toy example is simply scalable if the degree connectivity is limited (the latency is obviously $W=\frac{F}{LU}$).

Finally, we can propose a fluid model that encompasses both the range and degree models. Consider that there is a function $p(r,\beta)$ that describes the probability that a peer connects to another one given that their distance is $r$ and the density is $\beta$.

The equation to solve is still \eqref{eq:adapt1}, except that we now define
\begin{equation}
\gamma(\beta)=\int_{r>0} 2\pi r f(r) p(r,\beta) dr\text{.}
\label{eq:gammap}
\end{equation}

Under this formalism, the range model is simply $p(r,\beta)=1_{r\le R}$, while the degree limited model corresponds to $p(r,\beta)=1_{r\le \sqrt{\frac{L}{\pi\beta}}}$. For these two cases, the function $p$ corresponds to very simple overlays, but it could be used to model more complex structures like random geometric graphs.

\section{Conclusion}
\label{sec:conclusion}

In a P2P system with
a rate function $f$ and a range $R$,
the following general law quantifying
P2P super-scalability was identified:
the stationary latency is of the form
\begin{equation}
\label{eq:grail}
W_o=M\left(\sqrt{\frac{\pi^2 R^4 \lambda F}{\gamma}}\right)
\sqrt{\frac{ F}{\lambda \gamma}},
\end{equation}
with $\gamma=2\pi\int_{0}^R f(r) dr$ and
with $M(x)$ a function which is larger than 1 and tends to 1 
when $x$ tends to infinity.
In the TCP case,
the function $x\to M(x)$ is decreasing and hence reinforces 
super-scalability.

The conditions for the super-scalability formula \eqref{eq:grail}
to hold were also identified:
(1) The number of chunks should be large (so as to be in the fluid regime w.r.t. chunks);
(2) The parameter $N_f =\pi R^2 \sqrt{\lambda F/\gamma}$ should be large (so as to be in the fluid regime w.r.t. peers).
If (1) or (2) do not hold, then chunk/peer availability issues will
dominate and the model breaks down;
(3) the network should have the
capacity to cope with the P2P traffic, i.e.
\begin{equation}
E \sqrt{\theta}>
\frac{2 \lambda F}{\gamma} \int_0^R r^2 f(r) dr, 
\end{equation}
where $\theta$ is the spatial intensity of routers
and $E$ the typical link capacity. Hence the 
capacity of the network should scale like $\lambda$ if other
parameters are unchanged. If this condition does not hold, the network
cannot cope with the traffic and the model breaks down;
(4) The access should not be the bottleneck,
which translates into the requirement
\begin{equation}
U > \sqrt{\lambda F \gamma},
\end{equation}
where $U$ denotes the (total) upload capacity of each peer.
In other words, the latter should scale like $\sqrt{\lambda}$.
If this is not the case, then classical access bottleneck model
should be used. 

\paragraph*{Acknowlegments} This work has been carried out at LINCS (\url{http://www.lincs.fr}) and has been partly funded by the EIT ICT Labs Projects \emph{Fundamentals of Networking} and \emph{Distributed Content Delivery in Wireless Networks}.

\bibliographystyle{IEEEtran}

\label{LastPage}

\end{document}